\newtheorem{prop}{Proposition}
\definecolor{cadetblue}{rgb}{0.37, 0.62, 0.63}
 \newtheorem*{remark}{Remark}
\begin{document}
\title{Detecting entanglement harnessing Lindblad structure}

\author{Vaibhav Chimalgi} 
\affiliation{Center for Security Theory and Algorithmic Research,
International Institute of Information Technology, Gachibowli, Hyderabad 500 032, India.}

\author{Bihalan Bhattacharya} 
\email{bihalan@gmail.com}
\affiliation{Department of Mathematical Sciences, Indian Institute of Science Education and Research Berhampur, Transit Campus,
Government ITI, Berhampur 760010, Odisha, India.}

\author{Suchetana Goswami} 
\email{suchetana.goswami@gmail.com}
\affiliation{Harish-Chandra Research Institute, A CI of Homi Bhabha National Institute, Chhatnag Road, Jhunsi, Allahabad 211 019, India}

\author{Samyadeb Bhattacharya} 
\email{samyadeb.b@iiit.ac.in}
\affiliation{Center for Security Theory and Algorithmic Research,
International Institute of Information Technology, Gachibowli, Hyderabad 500 032, India.}


\begin{abstract}
    The problem of entanglement detection is a long standing problem in quantum information theory. One of the primary procedures of detecting entanglement is to find the suitable positive but non-completely positive maps. Here we try to give a generic prescription to construct a positive map that can be useful for such scenarios. We study a class of positive maps arising from Lindblad structures. We show that  two famous positive maps \textit{viz.} transposition and Choi map can be obtained as a special case of a class of positive maps having Lindblad structure. Generalizing the transposition map to a one parameter family we have used it to detect  genuine multipartite entanglement. Finally being motivated by the negativity of entanglement, we have defined a similar measure for genuine multipartite entanglement.   
\end{abstract}

\maketitle


\section{Introduction}

The core structure of quantum information processing is predominantly governed by presence of quantum entanglement \cite{EPR_35} in a non-local system \cite{BW_92, BBCJPW_93, BCWSW_12, AMP_12}. For the efficient implementation of many information processing tasks and different quantum algorithms, not only the presence of entanglement is necessary \cite{LP_01, V_13} but also it is the key property that helps a quantum structure to outperform it's classical counterpart \cite{JL_03}. Hence for all practical purposes, detecting entanglement in a given quantum system is one of the key domain of research in the corresponding literature. There are different detection criteria for entanglement in bipartite or multipartite systems \cite{GT_09} such as entanglement witnesses \cite{HHH_96, BCHHKLS_02}. Also to quantify the amount of the same in a given system, there are several well studied measures for bipartite systems, such as negativity \cite{VW_02}, concurrence \cite{HW_97, RBCHM_01} and many more \cite{HHH_09, PV_14}. While the field of detecting and measuring entanglement in bipartite system is a well explored area in quantum information theory \cite{ADH_08, GA_12, ZGZG_08, CHKST_16, GCGM_19}, moving to more parties and/or higher dimensions are still much untraveled in the literature. In most of the realisations of quantum algorithms the systems under consideration are multipartite and hence motivates the community to deal with the properties of entanglement in such scenarios. \\

The main mathematical ingredient of entanglement detection problem is positive but non completely positive maps. In some lower dimensions, such as for $2\times2$ and $2\times3$ dimensional states, these positive maps give rise to the necessary and sufficient condition for detecting entanglement and the corresponding map is partial transposition (PT) map \cite{HHH_96}. For higher dimensions this fails to give such criterion because there exist entangled states that are also positive under PT and this falls in the category of NP hard problems \cite{G_03}. The entanglement of such states are not distillable and hence they are called bound entangled states \cite{HHH_98}. These states are shown to be useful when the entanglement is unlocked \cite{HHH_99, SST_03}. Hence the area of study to detect bound entanglement is a fascinating and difficult area of research in quantum information theory \cite{LKPR_10, BSGMCRHB_10, SGSSK_18, BGMGCBM_21} and it is still far from exhaustive. It is already clear that a decomposable positive map can not detect a PPT entangled state [Ref]. So when a positive but not completely positive map detects PPT entanglement, then it is definitely an indecomposable map. As mentioned earlier, in higher dimensions PT fails to detect entanglement and hence people have introduced few other ways to detect entanglement in that regime, using computable cross norm or realignment criterion (CCNR criterion) \cite{R_05, CW_02}, range criterion \cite{BDMSST_99, BP_00} and others \cite{GHGE_07, GGHE_08, BGMGCBM_21}. Positive but non completely positive maps play the key role in such detection mechanisms. Here we try to put one step forward by trying to introduce a generic prescription to find such useful positive maps from a well defined structure. \\

From the literature of open quantum dynamics, we know that the most general quantum evolutions are represented by completely positive trace preserving (CPTP) maps, the generators of which (if exists) can be associated with Lindblad type super operators \citep{alicki,lindblad,gorini,breuer,samya2,Bhattacharya17,BBhattacharya21,Bhattacharya20,Maity20,Samya_eternal}. Study of Lindblad type dynamics is an integral part of research in open quantum systems. Interestingly we note that it is possible to construct a set of positive maps from a Lindblad operator by parameterizing the same. These positive maps are in turn useful to detect both bipartite and multipartite entanglement. In this paper, we first consider qubit systems. In this scenario, while detection of entanglement in bipartite quantum states is well studied in literature, moving to multipartite domain is much unexplored and non-trivial area. From the Lindblad structure we first construct an one parameter family of linear maps which acts on a qubit and helps detecting genuine multipartite entanglement (GME). We construct the corresponding witness operator and also give an universal measure for GME. The positive maps on this qubit systems is discussed thoroughly in Section \ref{sec2}. Next we move on to the problem of entanglement detection in higher dimension. Here also we construct suitable maps from the Lindblad structure. We check if the maps are decomposable or not as it has direct connection with detection of PPT entanglement. Finally from the general structure we construct various well known positive maps which are useful in detecting entanglement in higher dimensional systems. For this purpose we restrict ourselves to qutrit systems and the processes are described in Section \ref{sec3}. Finally in Section \ref{sec4}, we sum up the findings of the paper and try to open up some new directions to study in the literature of quantum information theory.

\section{Positive maps on $\mathcal{M}_2$}
\label{sec2}

Let $\mathcal{M}_2$ stands for the algebra of 2 by 2 complex matrices. Let us now define an one parameter family of linear maps $\Lambda$ on $\mathcal{M}_2$ as,
$\Lambda : \mathcal{M}_2 \longrightarrow \mathcal{M}_2 $ such that
\begin{eqnarray}
\Lambda (X) = && X + \gamma (\sigma_1 X \sigma_1 - \frac{1}{2} \sigma_1 \sigma_1 X - \frac{1}{2} X \sigma_1 \sigma_1) \nonumber\\ &-& \gamma (\sigma_2 X \sigma_2 - \frac{1}{2} \sigma_2 \sigma_2 X - \frac{1}{2} X \sigma_2 \sigma_2) \nonumber\\&+& \frac{1}{2} (\sigma_3 X \sigma_3 - \frac{1}{2} \sigma_3 \sigma_3 X - \frac{1}{2} X \sigma_3 \sigma_3) 
\label{lambda}
\end{eqnarray}
Here, $X \in \mathcal{M}_2$, $\gamma \in \mathbb{R}$, and
$\sigma_i$'s (for $i=1,2,3$) are the Pauli matrices.

The construction of the map is motivated from the structure of time independent Lindblad form. Further simplification leads the linear map to the given form as,
\begin{eqnarray}
\Lambda (X) = \begin{bmatrix}
r_{11}& 2 r_{21} \gamma\\
2 r_{12} \gamma & r_{22}
\end{bmatrix}  
\end{eqnarray}
for any $X= \begin{bmatrix}
r_{11}&r_{12}\\
r_{21}&r_{22}
\end{bmatrix} \in \mathcal{M}_2$ and $\gamma \in \mathbb{R}$. Note that, for $\gamma = \frac{1}{2}$ it reproduces the famous transposition map $\mathcal{T}$. Thus this family of map can be regarded as an one parameter generalisation of transposition map arising from Lindblad like structure. Hence it is interesting to search the range of the parameter $\gamma$ for which the map is positive and can potentially contribute to the literature of the problem of entanglement detection. Moreover we try to find whether it is completely positive (CP) for some some $\gamma$. Clearly for $\gamma = \frac{1}{2}$ the map is positive but not completely positive. \\

Let us consider a unit vector $\boldsymbol{\eta}$ in two dimensional Hilbert space. Applying the map $\Lambda$ on $\eta \eta^{*}$, where $\boldsymbol{\eta}^{*}$ is the adjoint of $\boldsymbol{\eta}$, we find that the family of maps $\Lambda$ is positive for $-1/2 \leq \gamma \leq 1/2$. Moreover, computing the Choi matrix of the corresponding family of maps we find that the maps are not completely positive for any $\gamma \in \mathbb{R}$. Therefore we have an one parameter family of positive but not completely positive (PNCP) maps $\Lambda$ for $-1/2 \leq \gamma \leq 1/2$ which are certainly useful for detecting entangled states.\\

\begin{prop}
     The minimum eigenvalue of the Choi state in $2\otimes2$ is the minimum eigenvalue for any state when subjected to $\Lambda$.
\end{prop}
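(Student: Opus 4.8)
The plan is to establish the proposition in two halves: a \emph{universal lower bound}, namely that every state $\rho$ on $\mathbb{C}^2\otimes\mathbb{C}^2$ obeys $\lambda_{\min}\!\big((\mathrm{id}\otimes\Lambda)(\rho)\big)\ge-|\gamma|$, together with a \emph{saturation} statement, namely that the (normalized) Choi state attains exactly $-|\gamma|$. Since the Choi state is by definition $(\mathrm{id}\otimes\Lambda)$ applied to the maximally entangled state, these two facts identify its least eigenvalue as the global minimum of $\lambda_{\min}$ over all states, which is precisely what is asserted.

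The engine of the argument is a structural decomposition of $\Lambda$ itself. Reading off the matrix form, $\Lambda$ keeps the diagonal fixed and scales the off-diagonal by $2\gamma$ after swapping it; writing $\Delta$ for the completely dephasing (hence completely positive) map $\Delta(X)=\ket{0}\!\bra{0}X\ket{0}\!\bra{0}+\ket{1}\!\bra{1}X\ket{1}\!\bra{1}$ and using the transposition $\mathcal{T}$, one checks the identity $\Lambda=2\gamma\,\mathcal{T}+(1-2\gamma)\,\Delta$. For $0\le\gamma\le\tfrac12$ this is a genuine convex combination. For $-\tfrac12\le\gamma\le0$ I would instead use $\Lambda=2|\gamma|\,\big(\sigma_3\,\mathcal{T}(\cdot)\,\sigma_3\big)+(1-2|\gamma|)\,\Delta$; here $\sigma_3\mathcal{T}(\cdot)\sigma_3$ is a unitary conjugate of transposition, so $(\mathrm{id}\otimes[\sigma_3\mathcal{T}(\cdot)\sigma_3])(\rho)=(\mathbb{I}\otimes\sigma_3)\,\rho^{T_B}(\mathbb{I}\otimes\sigma_3)$ has the same spectrum as the partial transpose $\rho^{T_B}$.

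The decomposition converts into the eigenvalue bound through two elementary inputs: concavity of $\lambda_{\min}$, giving $\lambda_{\min}(tA+(1-t)B)\ge t\,\lambda_{\min}(A)+(1-t)\lambda_{\min}(B)$ for $t\in[0,1]$, and the $2\otimes2$ partial-transpose bound $\lambda_{\min}(\rho^{T_B})\ge-\tfrac12$. Applying concavity to the decomposition, then using $(\mathrm{id}\otimes\Delta)(\rho)\succeq0$ (so its least eigenvalue is nonnegative) together with the transpose bound, yields $\lambda_{\min}\!\big((\mathrm{id}\otimes\Lambda)(\rho)\big)\ge 2|\gamma|\cdot(-\tfrac12)+(1-2|\gamma|)\cdot0=-|\gamma|$ for every state $\rho$. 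The transpose bound I would prove in passing: for a unit vector $\ket{v}$ with Schmidt coefficients $\mu_0,\mu_1$ the operator $(\ket{v}\!\bra{v})^{T_B}$ has spectrum $\{\mu_0^2,\mu_1^2,\pm\mu_0\mu_1\}$, so $(\ket{v}\!\bra{v})^{T_B}+\tfrac12\mathbb{I}\succeq0$, whence $\langle v|\rho^{T_B}|v\rangle=\mathrm{Tr}\big[\rho\,(\ket{v}\!\bra{v})^{T_B}\big]\ge-\tfrac12$.

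To finish I would exhibit the saturating state: a direct computation of the Choi matrix gives eigenvalues $\{\tfrac12,\tfrac12,|\gamma|,-|\gamma|\}$, so its least eigenvalue is exactly $-|\gamma|$, and the singlet direction $\tfrac{1}{\sqrt2}(\ket{01}-\ket{10})$ is simultaneously the $-\tfrac12$ eigenvector of $\rho^{T_B}$ and a null vector of $(\mathrm{id}\otimes\Delta)(\rho)$ — exactly the alignment making the concavity step an equality. The part demanding care is not the arithmetic but securing a decomposition valid over the \emph{entire} range $[-\tfrac12,\tfrac12]$: for negative $\gamma$ the bare transposition must be replaced by its $\sigma_3$-conjugate so the coefficients remain nonnegative and the convex-combination bound stays tight, whereas the naive decomposition would give only the weaker, non-saturated estimate $-2|\gamma|$.
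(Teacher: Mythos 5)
Your proof is correct, but it takes a genuinely different route from the paper's. The paper argues by direct computation: it reduces to pure inputs written in Schmidt form $c_1\ket{00}+c_2\ket{11}$, evaluates $(\Lambda\otimes\mathcal{I})(\ket{\psi}\bra{\psi})$ explicitly, reads off the minimum eigenvalue $-2|c_1|\,|c_2|\,|\gamma|$, and maximizes $|c_1|\,|c_2|$ at $c_1=c_2=1/\sqrt{2}$. You instead decompose the map itself, $\Lambda=2\gamma\,\mathcal{T}+(1-2\gamma)\Delta$ (with the $\sigma_3$-twisted transposition when $\gamma<0$), and combine Weyl's inequality $\lambda_{\min}(A+B)\ge\lambda_{\min}(A)+\lambda_{\min}(B)$ with the two-qubit bound $\lambda_{\min}(\rho^{T_B})\ge-\tfrac{1}{2}$ and the complete positivity of the dephasing part. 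Your version buys three things: it covers all inputs --- mixed states, and pure states whose Schmidt bases are not the computational basis --- in one stroke, which matters because $\Lambda$ is built from the basis-dependent map $\Delta$ and hence is not covariant under local unitaries, so the paper's ``without loss of generality, Schmidt form'' step is less innocuous than it looks; it makes the $\gamma<0$ case explicit (the paper's final answer $-\gamma$ should strictly read $-|\gamma|$); and it exhibits $\Lambda$ as a sum of a completely positive and a completely co-positive map, i.e.\ decomposable, which resonates with the decomposability discussion of Section III. The paper's version buys brevity and an explicit output matrix; your saturation step, with Choi eigenvalues $\{\tfrac{1}{2},\tfrac{1}{2},|\gamma|,-|\gamma|\}$ and the singlet simultaneously the $-\tfrac{1}{2}$ eigenvector of $\rho^{T_B}$ and a null vector of the dephased part, recovers the same extremal state the paper finds.
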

\begin{proof}
Let us now try to find the minimum eigenvalue of the output matrix corresponding to any pure input state. Without loss of generality we can consider an input pure state in its Schmidt form. A pure state can be of two types: either product or entangled. For a product input state the eigenvalues of the output matrix is non-negative always as the map is positive. So to get the minimum eigenvalue we need to consider entangled states as an input. Let us consider
 $\ket {\psi} = c_1 \ket {00} + c_2 \ket{11}$, where $\vert c_1\vert^2 + \vert c_2\vert^2 =1 $. Now $\Lambda \otimes \mathcal{I}(\ket {\psi} \bra{\psi})$ produces the output matrix as
 \[
 \begin{bmatrix}
   \vert c_1\vert^2 &0&0&0\\
   0&0& 2c_2 c_{1}^{*}\gamma &0\\
   0& 2c_1 c_{2}^{*}\gamma&0&0\\
   0&0&0&\vert c_2\vert^2
 \end{bmatrix},
 \]
whose minimum eigenvalue is given by $\mathcal{E}_{min} = -2 \vert c_1\vert\vert c_2\vert \gamma$. Clearly, since $\vert c_1\vert^2 + \vert c_2\vert^2 =1 $, $\vert c_1\vert\vert c_2\vert \leq \frac{1}{2}$. Here the upper bound is achieved at $c_1 =c_2=\frac{1}{\sqrt{2}}$. Therefore the minimum eigenvalue of the output matrix is $-\gamma$ and it is obtained for two qubit maximally entangled state as input. Clearly for $\gamma= \frac{1}{2}$  we are getting back the previous result of transposition map. 
\end{proof}

It is well known that the transposition map plays an instrumental role in entanglement detection. Seminal results by Stormer, Woronowich and Horodecki's have established that transposition is the only PNCP map acting on $\mathcal{M}_2$ which is responsible for entanglement detection in two qubit or qubit-qutrit systems \cite{HHH_96}. In this paper we do not focus on entanglement detection on two qubit or qubit-qutrit systems. Rather we are interested in the effectiveness of the map in detecting multipartite entanglement. The domain of entanglement detection in this scenario becomes drastically non trivial as it involves the concept of genuine multipartite entanglement along with bi-separability as more than two parties are involved. In this paper, we concentrate on the simplest case in this genre where there are three qubits present.

\begin{center}
\textit{Multipartite entanglement}
\end{center}

\begin{figure*}[htp]
\centering
\fbox{
\subfigure[Eigenvalues when the map acts on $W$ state.]{\includegraphics[scale=0.6]{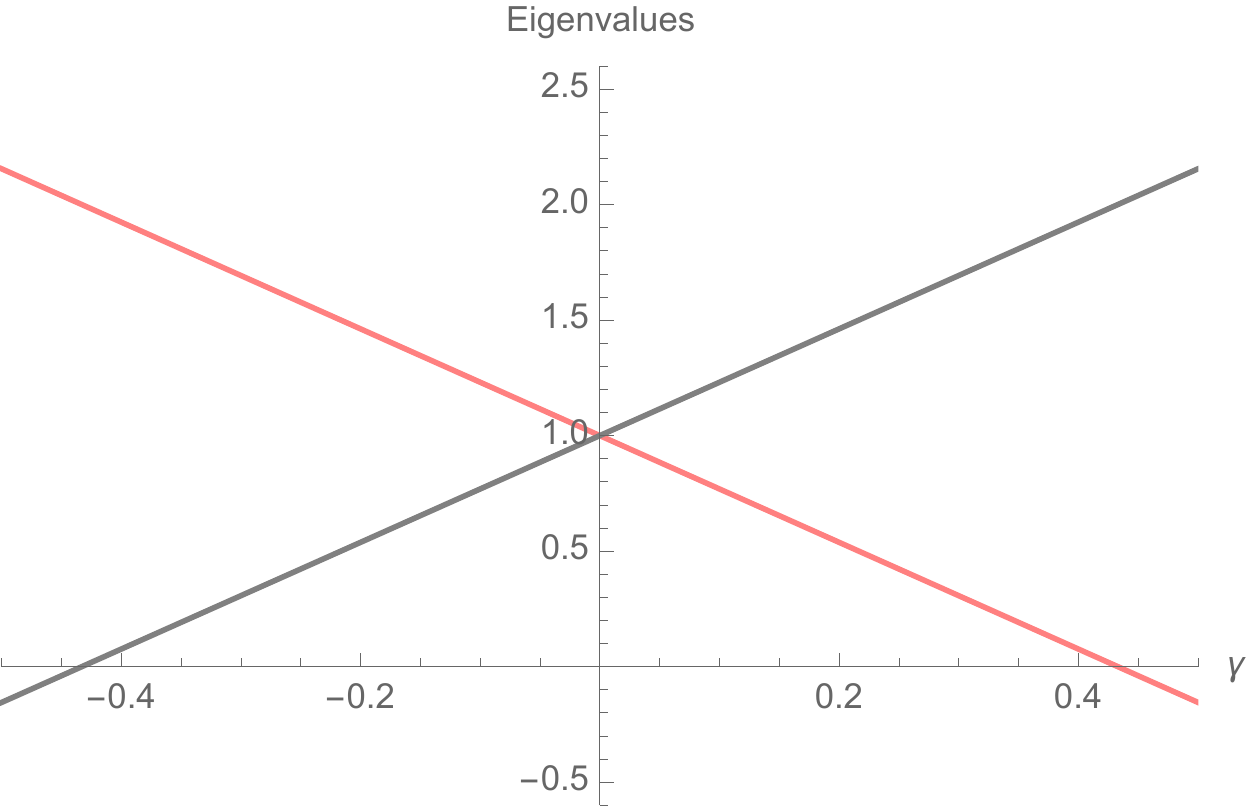}}
\qquad
\subfigure[Eigenvalues when the map acts on $GHZ$ state.]{\includegraphics[scale=0.6]{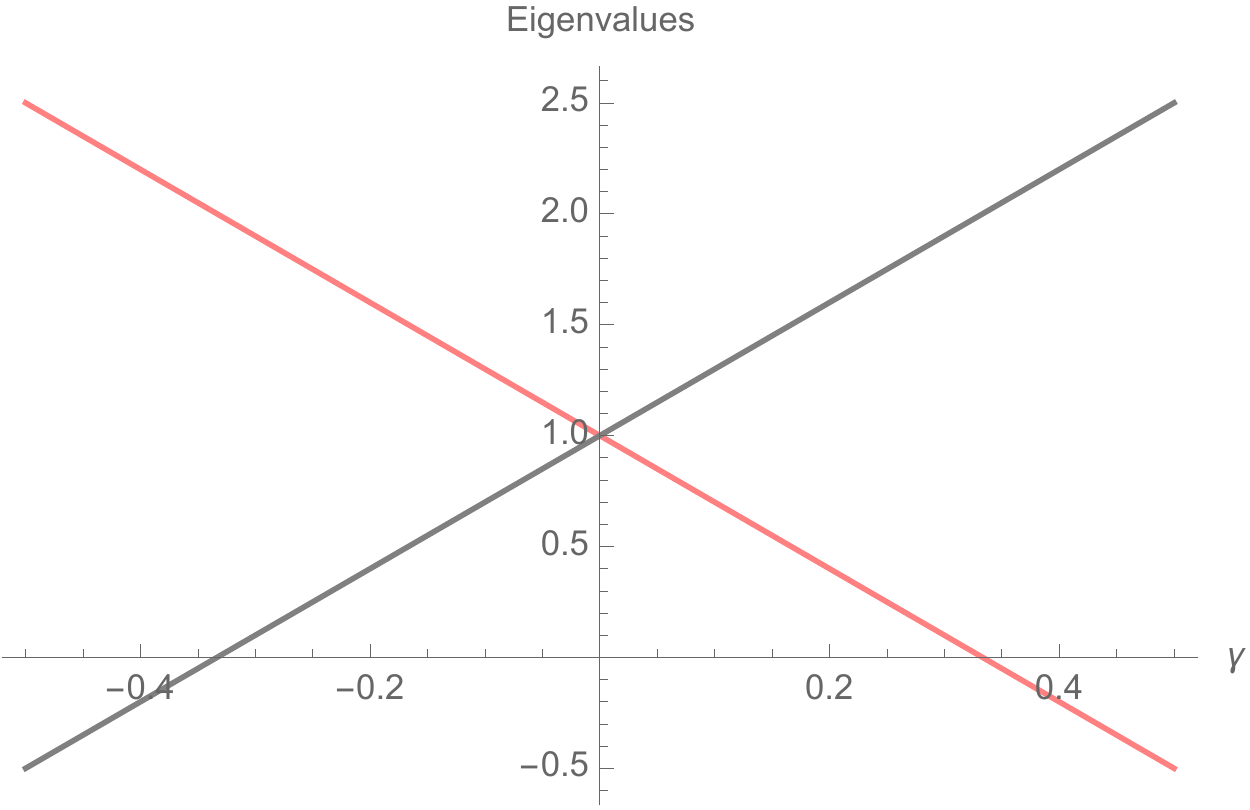}}}
\caption{\footnotesize{Variation of the eigenvalues which becomes negative with the map parameter $\gamma$ and hence detecting the corresponding states.}}
\label{N_E}
\end{figure*}

In this subsection, we concentrate on detecting genuinely multipartite entanglement by using positive maps. Let us consider a positive linear map $\tilde{\Lambda}$ corresponding to the positive map $\Lambda : \mathcal{M}_2 \longrightarrow \mathcal{M}_2 $ and the action of the map on a three-qubit system can be seen as, \\
\begin{eqnarray}
\tilde{\Lambda} [*] &&=(\Lambda_A \otimes \mathcal{I}_B \otimes \mathcal{I}_C+ \mathcal{I}_A \otimes \Lambda_B \otimes \mathcal{I}_C \nonumber\\
&+& \mathcal{I}_A \otimes \mathcal{I}_B \otimes\Lambda_C + c~Tr \mathcal{I})[*]
\label{lambda_t}
\end{eqnarray}
where A, B and C are the labels of the parties and $\mathcal{I}_X$ is the identity map on party X $\in  \lbrace A, B,C \rbrace$ and $c$ is a constant. We can call this map as the lifting of the positive map $\Lambda$. It was shown in paper \cite{HS_14}, that the similar lifting of the transposition map $\mathcal{T}$ can detect three qubit $W$ states. Our proposed map is a one parameter generalization of the Transposition map arising from the Lindblad structure. We have found that $\tilde{\Lambda} (\vert W \rangle \langle W \vert) \ngeq 0$ for the values of $\gamma \in [-\frac{1}{2}, -\frac{\sqrt{3}}{4}) \bigcup ( \frac{\sqrt{3}}{4}, \frac{1}{2}]$. Clearly $\gamma = \frac{1}{2}$ corresponds to the the transposition map.\\

Now we consider the case when the $W$ state is mixed with some white noise. Hence the state can be written as, $\tilde{W}= p \vert W \rangle \langle W \vert + (1-p) \frac{\mathbb{I}}{8} $. It was shown in \cite{HS_14, CHLM_17, VMB_22} that for $\gamma = \frac{1}{2}$ genuine entanglement can be detected for all $p$ strictly greater than 0.73. Our finding is that the same is also true for $\gamma = -\frac{1}{2}$. Moreover $\gamma = \frac{1}{2}$  and  $\gamma = -\frac{1}{2}$ are the best choice for the class of noisy $W$ state as for $\gamma \in (-\frac{2}{5}, \frac{2}{5})$ no genuine entanglement can be detected by $\tilde{\Lambda}$. Beyond the above mentioned region, $\gamma = \frac{1}{2}$  and  $\gamma = -\frac{1}{2}$ are the only values of the map parameter for which the largest class of genuinely entangled noisy $W$ states can be detected by this maps. In a similar way GHZ state can also be detected by using the map $\tilde{\Lambda}$ followed by a rotation by $\sigma_x$ operator. The result has been graphically represented in Figure \ref{N_E}.\\

\par
\paragraph*{ \textbf{Construction of witness for genuine multipartite entanglement (GME):}}

Further, we can also construct a witness for detection of GME in the multipartite scenario from the above class of transposition maps. A witness \(\mathcal{W}\) is a Hermitian operator such that \(Tr[\mathcal{W}\sigma_{2-sep}]\geq0\) for all separable and bi-separable states $\sigma_{2-sep}$ and \(Tr[\mathcal{W}\sigma_{GME}]\ngeq0\) for at least one genuinely entangled state. We know that entanglement witnesses are directly measurable quantities and hence a very useful tool for detecting entanglement in experiments.

\begin{prop}
\(\mathcal{W}_{GME} = \tilde{\Lambda}(\ket{W}\bra{W})\) is a witness detecting GME where \(\ket{W} = \dfrac{1}{\sqrt{3}}(\ket{001}+\ket{010}+\ket{100})\).
\end{prop}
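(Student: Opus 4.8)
The plan is to check the two defining conditions of a GME witness in turn: that $\mathrm{Tr}[\mathcal{W}_{GME}\,\sigma_{2\text{-}sep}]\ge 0$ for every biseparable state and that $\mathrm{Tr}[\mathcal{W}_{GME}\,\sigma]<0$ for at least one genuinely entangled $\sigma$. Since $\Lambda$ is Hermiticity preserving, so is its lifting $\tilde\Lambda$, and hence $\mathcal{W}_{GME}=\tilde\Lambda(\ket{W}\bra{W})$ is a bona fide Hermitian operator. For the first condition I would first use convexity: the biseparable states form the convex hull of pure states that factorize across one of the three cuts $A|BC$, $B|AC$, $C|AB$, so it suffices to show $\bra{\psi}\mathcal{W}_{GME}\ket{\psi}\ge 0$ for each such pure $\ket{\psi}$. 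Because $\ket{W}$ is invariant under all permutations of the three qubits and the lifting $\tilde\Lambda$ treats $A,B,C$ symmetrically, $\bra{\psi}\mathcal{W}_{GME}\ket{\psi}$ is unchanged when $\ket\psi$ is permuted; I may therefore fix the cut to $A|BC$ and write $\ket{\psi}=\ket{a}_A\otimes\ket{\chi}_{BC}$.

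Next I would transfer the map from $\ket W\bra W$ onto $\ket\psi\bra\psi$ using the self-adjointness of $\tilde\Lambda$ with respect to the Hilbert--Schmidt inner product (each Pauli-conjugation term is self-adjoint, as is the trace term for real $c$), obtaining $\bra{\psi}\mathcal{W}_{GME}\ket{\psi}=\bra{W}\tilde\Lambda(\ket{\psi}\bra{\psi})\ket{W}$. Expanding the lifting gives three partial-map contributions plus the constant, $\bra{W}(\Lambda_A\otimes\mathcal I_{BC})(\ket\psi\bra\psi)\ket{W}+\bra{W}(\mathcal I_A\otimes\Lambda_B\otimes\mathcal I_C)(\ket\psi\bra\psi)\ket{W}+\bra{W}(\mathcal I_{AB}\otimes\Lambda_C)(\ket\psi\bra\psi)\ket{W}+c$. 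For the cut-aligned term the positivity of $\Lambda$ does the work: $(\Lambda_A\otimes\mathcal I_{BC})(\ket a\bra a\otimes\ket\chi\bra\chi)=\Lambda(\ket a\bra a)\otimes\ket\chi\bra\chi\ge 0$, so this contribution is non-negative. Writing $\Lambda=\mathcal I+\mathcal L$ with $\mathcal L$ the dissipator also exposes an explicit non-negative slack $3\,|\langle W|\psi\rangle|^2+c$, leaving only the two ``crossing'' dissipator terms $\mathcal L_B,\mathcal L_C$ acting inside the (possibly entangled) factor $\ket\chi_{BC}$ as the pieces that can become negative.

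The main obstacle is to show that these crossing contributions are compensated by the non-negative terms and the constant $c$ for every admissible $\ket a,\ket\chi$. Concretely I would parametrize $\ket a$ and $\ket\chi$, reduce $\bra{W}\tilde\Lambda(\ket\psi\bra\psi)\ket{W}$ to an explicit real function of these parameters (using the permutation symmetry of $\ket W$ to collapse repeated terms), and minimize it over all normalized product states of the $A|BC$ cut; establishing that this minimum is non-negative --- equivalently pinning $c$ to the least value for which it holds --- is a genuine constrained optimization and is the step I expect to be hardest.

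Finally, for the detection condition I would appeal to the fact, already established above, that $\tilde\Lambda(\ket W\bra W)\ngeq 0$ for the relevant range of the parameter (in particular $\gamma=\pm\tfrac12$). Thus $\mathcal{W}_{GME}$ has a strictly negative eigenvalue, and its eigenvector $\ket{\phi}$ satisfies $\mathrm{Tr}[\mathcal{W}_{GME}\ket\phi\bra\phi]=\bra\phi\mathcal{W}_{GME}\ket\phi<0$. By the biseparable bound just proved, $\ket\phi$ cannot be biseparable, so as a pure three-qubit state it is genuinely multipartite entangled; this exhibits a GME state detected by $\mathcal{W}_{GME}$ and completes the verification.
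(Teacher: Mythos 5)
Your overall architecture matches the paper's: use the Hilbert--Schmidt self-adjointness of $\Lambda$ (and hence of $\tilde\Lambda$) to move the map from $\ket{W}\bra{W}$ onto the biseparable state, establish non-negativity there, and then use the negative eigenvalue of $\tilde\Lambda(\ket{W}\bra{W})$ together with the biseparable bound to exhibit a detected GME state. However, the central step --- showing that the two ``crossing'' contributions are compensated by the constant $c$ --- is exactly where your argument stops: you reduce it to a constrained minimization over all $\ket{a}_A\otimes\ket{\chi}_{BC}$ and explicitly leave that optimization open. As written, the first witness condition is therefore not established, and this is a genuine gap rather than a routine omission.

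The paper closes this gap not by optimizing but by invoking the operator-level statement $\tilde\Lambda(\sigma_{2\text{-}sep})\geq 0$ (the lifting construction of Huber and Sengupta), which is stronger than the expectation-value bound you aim for and makes the minimization unnecessary. The key input is Proposition 1 of the paper: the minimum eigenvalue of $(\Lambda\otimes\mathcal I)$ acting on any bipartite state is $-\gamma$, i.e.\ $(\Lambda\otimes\mathcal I)(\ket{\chi}\bra{\chi})\geq-\gamma\,\mathbb I$. For a pure state factorizing as $\ket{a}_A\otimes\ket{\chi}_{BC}$, the cut-aligned term is positive (as you note), while each crossing term satisfies
\begin{equation*}
(\mathcal I_A\otimes\Lambda_B\otimes\mathcal I_C)\bigl(\ket{a}\bra{a}\otimes\ket{\chi}\bra{\chi}\bigr)=\ket{a}\bra{a}\otimes(\Lambda\otimes\mathcal I)(\ket{\chi}\bra{\chi})\geq-\gamma\,\mathbb I,
\end{equation*}
so the two crossing terms together are bounded below by $-2\gamma\,\mathbb I$ and are exactly offset by the choice $c=2\gamma$. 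Convexity and permutation symmetry then extend this to all biseparable states, and the trace inequality you want follows immediately. Your proposal identifies all the right pieces but misses that Proposition 1 is precisely the uniform bound that collapses your ``hardest step'' into a one-line operator inequality.
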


\begin{proof}
Let us consider the map \(\tilde{\Lambda}\) introduced in Eq. (\ref{lambda_t}) obtained by lifting the map \(\Lambda\) action of which has been shown in Eq. (\ref{lambda}) and can be simplified as, 
\begin{eqnarray}
\Lambda(\rho) = \rho + \sum_{i} \gamma_i (\sigma_i \rho \sigma_i - \dfrac{1}{2}(\sigma_i \sigma_i \rho + \rho \sigma_i \sigma_i))
\label{lambda_alt}
\end{eqnarray}
Expanding as given in Eq. (\ref{lambda_alt}), it can be easily seen that,  $Tr[\Lambda(\rho)\ket{W}\bra{W}] = Tr[\Lambda(\ket{W}\bra{W})\rho]$. 
Hence we have,
\begin{align}Tr[\tilde{\Lambda}(\sigma_{2-sep})\ket{W}\bra{W}] = Tr[\tilde{\Lambda}(\ket{W}\bra{W})\sigma_{2-sep}] 
\end{align}
We know from [Ref] that,
\[\tilde{\Lambda}(\sigma_{2-sep}) \geq 0\]
\[\Rightarrow Tr[\tilde{\Lambda}(\sigma_{2-sep})\ket{W}\bra{W}] \geq 0 \]
Hence we get,
\begin{align}
    Tr[\tilde{\Lambda}(\ket{W}\bra{W})\sigma_{2-sep}] \geq 0
    \label{wit}
\end{align}

The minimum eigenvalue for the map \(\Lambda\) is \(\gamma\) i.e \(\mathcal{E}_{min}(\Lambda) = \gamma\). Hence, the constant c in Eq. (\ref{lambda_t}) is \(2\gamma\). We can easily see that \(\tilde{\Lambda}(\sigma_{2-sep})\geq0\) when c is atleast \(2\gamma\) for the tripaprtite scenario. Also, we found that \(\tilde{\Lambda}(\ket{W}\bra{W}) \ngeq0\) for the above range of \(\gamma\). Finally following Eq. (\ref{wit}), \(\mathcal{W}_{GME} = \tilde{\Lambda}(\ket{W}\bra{W})\) acts as GME witness.
\end{proof}

\begin{remark}
Since the proof works for any general \(\tilde{\Lambda}\) constructed from the Lindbladian structure, all the obtained maps in this paper have corresponding witnesses that can obtained in a similar way.
\end{remark}

\paragraph*{\textbf{Negativity for GME :}}
In bipartite systems a well known measure for entanglement is Negativity and it is defined in terms of the negative eigenvalue of the partially transposed density matrix. Here, we move one step forward by introducing a similar measure for genuine multipartite entanglement (GME). Let us consider the lifted Transposition map $\tilde{\mathcal{T}}$. One should note that $\tilde{\mathcal{T}}$ is not trace preserving. Therefore let us consider the normalized map $\hat{\tilde{\mathcal{T}}}= \frac{1}{\text{N}}\tilde{\mathcal{T}} $, where N is the normalizing factor. It is defined as follows,
\begin{eqnarray}
\mathcal{N}_{GME}(\rho) = \dfrac{\|\hat{\tilde{\mathcal{T}}}[\rho]\|_{1}-1}{\mathcal{K}},
\label{GME_measure}
\end{eqnarray}
where $\mathcal{K}$ is a factor of normalization and to be chosen suitably.


\begin{prop}
$\mathcal{N}_{GME}$ introduced in Eq. (\ref{GME_measure}) is a valid entanglement measure. 
\end{prop}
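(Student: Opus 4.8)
The plan is to verify the defining axioms of a (genuine multipartite) entanglement measure for $\mathcal{N}_{GME}$: non-negativity together with vanishing on the set of biseparable states, monotonicity (non-increase) under local operations and classical communication, and convexity under mixing. I would dispose of the cheap ingredients first. Since the normalizing factor $\text{N}$ is chosen so that $\hat{\tilde{\mathcal{T}}}$ is trace preserving, and the lifted map built from Eq. (\ref{lambda_alt}) manifestly preserves Hermiticity, the output $\hat{\tilde{\mathcal{T}}}[\rho]$ is a Hermitian operator of unit trace. Writing its eigenvalues as $\lambda_i$, one has $\|\hat{\tilde{\mathcal{T}}}[\rho]\|_1 = \sum_i |\lambda_i| \geq \sum_i \lambda_i = 1$, which immediately gives $\mathcal{N}_{GME}(\rho) \geq 0$ once $\mathcal{K} > 0$ is fixed; $\mathcal{K}$ can be pinned down by requiring the measure to equal $1$ on a chosen maximally genuinely-entangled reference state.

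For the vanishing on biseparable states I would invoke the positivity already established above, namely $\tilde{\Lambda}(\sigma_{2-sep}) \geq 0$ (equivalently $\hat{\tilde{\mathcal{T}}}[\sigma_{2-sep}] \geq 0$ for the suitably normalized and constant-shifted map with $c \geq 2\gamma$). When the output is positive semidefinite its trace norm coincides with its trace, so $\|\hat{\tilde{\mathcal{T}}}[\sigma_{2-sep}]\|_1 = 1$ and hence $\mathcal{N}_{GME}(\sigma_{2-sep}) = 0$. Convexity is equally direct: $\hat{\tilde{\mathcal{T}}}$ is linear, so for $\rho = \sum_k p_k \rho_k$ the triangle inequality for the trace norm gives $\|\hat{\tilde{\mathcal{T}}}[\rho]\|_1 \leq \sum_k p_k \|\hat{\tilde{\mathcal{T}}}[\rho_k]\|_1$; subtracting $1 = \sum_k p_k$ and dividing by $\mathcal{K}$ yields $\mathcal{N}_{GME}(\rho) \leq \sum_k p_k \mathcal{N}_{GME}(\rho_k)$.

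The substantive obstacle is monotonicity, the multipartite analogue of the Vidal--Werner argument that the bipartite negativity is an entanglement monotone. The strategy is to exploit two facts: the trace norm is contractive under any trace-preserving completely positive map, and the lifted transposition $\tilde{\mathcal{T}}$ intertwines with local channels in a controlled way because transposition on one subsystem commutes with operations on the complementary subsystems. The delicate point, absent in the bipartite case, is that $\tilde{\mathcal{T}}$ is a \emph{sum} of single-party transpositions together with a trace term rather than a single partial transpose, so one must check that a local Kraus operation acting on party $X$ can be pushed through each summand of Eq. (\ref{lambda_t}) to produce again a completely positive local map acting on the transformed operator, after which contractivity of $\|\cdot\|_1$ closes the estimate. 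I expect the bookkeeping of these commutation relations, together with verifying that the constant $c\,Tr$ piece does not spoil the contraction and that the inequality survives averaging over measurement outcomes (strong monotonicity), to be where the real work lies; the homogeneity of $\|\cdot\|_1$ and the self-duality identity $Tr[\Lambda(\rho)\sigma] = Tr[\rho\,\Lambda(\sigma)]$ already used above should be the levers that make the outcome-averaged inequality go through.
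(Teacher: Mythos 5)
Your overall strategy coincides with the paper's: prove vanishing on biseparable states from $\hat{\tilde{\mathcal{T}}}[\sigma_{2-sep}]\geq 0$ together with trace preservation, get convexity from linearity of the map and the triangle inequality for $\|\cdot\|_1$, and reduce LOCC monotonicity to a Vidal--Werner-type intertwining argument. The first two items are complete and match the paper essentially verbatim (the paper leaves the explicit triangle-inequality computation implicit in the final text but clearly intends the same argument), and your remark that $\|\hat{\tilde{\mathcal{T}}}[\rho]\|_1\geq \mathrm{Tr}\,\hat{\tilde{\mathcal{T}}}[\rho]=1$ gives non-negativity is a small addition the paper does not state.

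The gap is that the one axiom carrying all the weight, strong monotonicity under LOCC, is only planned in your write-up, not proved. You correctly isolate the two places where the bipartite Vidal--Werner argument does not transfer verbatim: (i) a Kraus operation on party $X$ does not commute with transposition on that same party; rather $(M\rho M^{\dagger})^{T_X}=\bar{M}\,\rho^{T_X}M^{T}$, so the intertwined map uses the conjugated Kraus operators $\bar{M}_i$, which fortunately still satisfy $\sum_i \bar{M}_i^{\dagger}\bar{M}_i=\overline{\sum_i M_i^{\dagger}M_i}\leq \mathcal{I}$ and hence still define a sub-normalized CP local map; and (ii) the additive $c\,\mathrm{Tr}[\cdot]\,\mathbb{I}$ term in Eq.\ (\ref{lambda_t}) is genuinely not covariant under local filtering, since $\mathcal{M}_i(c\,\mathrm{Tr}[\rho]\,\mathbb{I})=c\,\mathrm{Tr}[\rho]\,M_iM_i^{\dagger}\otimes\mathbb{I}\otimes\mathbb{I}$, which differs from $c\,\mathrm{Tr}[\mathcal{M}_i(\rho)]\,\mathbb{I}$. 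Neither point is closed in your proposal, so the monotonicity claim is not established. To be fair, the paper's own proof is no more complete here: it asserts the commutation relation without the complex conjugation and then appeals to ``Vidal's approach,'' leaving issue (ii) unaddressed entirely. Carrying out the bookkeeping you describe would amount to supplying the argument the paper omits, and issue (ii) in particular needs a genuine fix (e.g.\ restricting to trace-preserving local operations, or redefining the measure without the trace term) rather than mere bookkeeping.
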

\begin{proof}
To prove the above proposition, we must show that,
\begin{enumerate}
\item 
\(\mathcal{N}_{GME}(\rho_{2-sep}) = 0\)
\item
\(\mathcal{N}_{GME}\) is monotone under LOCC.
\item
\(\mathcal{N}_{GME}\) is convex.
\end{enumerate}
Here $\rho_{2-sep}$ denotes any bi-separable state and LOCC stands for local operation and classical communication.\\

For any Hermitian operator A, the trace norm i.e \(\|A\|_1 = Tr\sqrt{A^\dagger A}\) is the sum of the absolute values of the eigenvalues of A. 
Here we have, \(\hat{\tilde{\mathcal{T}}}[\rho_{2-sep}] \geq 0\) as seen in previous section and also being trace preserving which implies \(\|\hat{\tilde{\mathcal{T}}}[\rho_{2-sep}]\|_{1} = 1\). Hence , $\mathcal{N}_{GME}(\rho_{2-sep}) = 0$. \\

Now, we show that \(\mathcal{N}_{GME}\) is a monotone under LOCC. In a LOCC protocol one of the constituent parties perform some local measurement and broadcast the result to other parties. Depending upon the outcome of one party the subsequent parties choose their respective local operations. Without loss of generality lets say Alice starts the protocol. If the initial state is $\rho$, then after the measurement of Alice the final unnormalised state becomes
\[\mathcal{M}_i(\rho) = (M_i \otimes \mathcal{I}_B \otimes \mathcal{I}_C) \rho (M_{i}^\dagger \otimes \mathcal{I}_B \otimes \mathcal{I}_C)\]
where \[\sum_i M_i^\dagger M_i \leq \mathcal{I}_A.\]


\par
We note that,
\begin{align}
    \mathcal{M}_i(\mathcal{T}\otimes \mathcal{I}_B \otimes \mathcal{I}_C[\rho]) = \mathcal{T}\otimes \mathcal{I}_B \otimes \mathcal{I}_C(\mathcal{M}_i(\rho))
\end{align}

Due to symmetry, Bob and Charlie can perform the similar operation. Moreover this invariance is true even if they do nothing. In this scenario one round of LOCC is considered whenever Alice, Bob and Charlie complete their operation once. Hence following Vidal's approach \cite{VW_02} we can establish the monotonicity of $\mathcal{N}_{GME}$ under LOCC. Hence the claim.
\end{proof}

As an example of the usefulness of the measure, we calculate it for the state $\ket{W}$ mixed with a maximally mixed state by a varying noise parameter $p$ given by 
\[\rho_W=p\ket{W}\bra{W}+(1-p)\frac{\mathbb{I}}{8}.\]
Note that the measure takes a non-zero value from $p=0.9$ to $p=1$ as there does not exist any negative eigenvalues of the density matrix after the action of the map other than the mentioned region. We show the variation of the same in the Figure \ref{W_p_GME}.

\begin{figure}[htp]
\fbox{\includegraphics[scale=0.5]{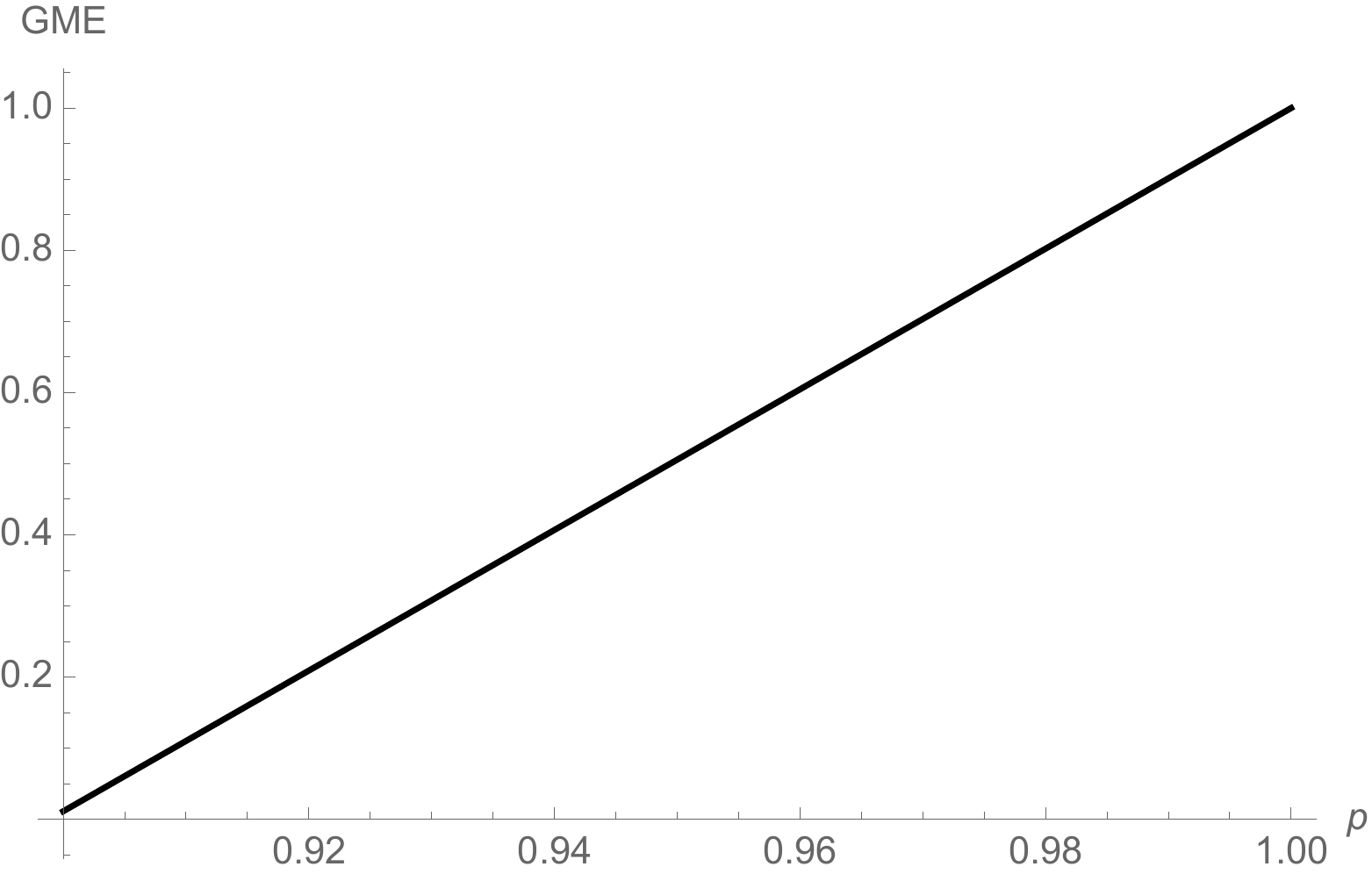}}
\caption{\footnotesize{Variation of the measure of GME with the white noise parameter introduced in $\rho_W$. For p=1, it gives back $W$ state where the GME takes the maximum value.}}
\label{W_p_GME}
\end{figure}

\section{Positive maps on $\mathcal{M}_3$}
\label{sec3}

In this section, we consider positive maps in higher dimensions, arising from the Lindblad structure. Other than generalising our method, the reason behind considering higher dimensional maps is to explore the possibilities of both decomposable and indecomposable maps arising from the structure of Lindbladians. One of the maps we will look into, can be called a Choi-like map which turns out to be decomposable and not stronger than the famous transposition map. On the other hand we will also derive the indecomposable Choi map useful for detecting bound entanglement by the proposed method. \\

Let $\{G_{i}\}$ be the set of Gell Mann matrices with \(i = 1,2,3,..,9\) and $G_0=\mathcal{I}_3$ is the 3-dimensional identity matrix. In the similar fashion with the previous section the algebra of \(3 \times 3\) complex matrices is denoted by $\mathcal{M}_3$. We define a linear map, $\Phi : \mathcal{M}_{3} \rightarrow \mathcal{M}_{3}$ such that,
\begin{eqnarray}
\Phi[\rho] = (\mathcal{I} + L)[\rho],~~\forall \rho \in \mathcal{M}_{3}
\end{eqnarray}
where $L : \mathcal{M}_{3} \rightarrow \mathcal{M}_{3}$ is another linear map given by,
\begin{eqnarray}
L[\rho] = \sum_{i} \gamma_{i}(G_{i}\rho G_{i} - \frac{1}{2}(G_{i}G_{i}\rho + \rho G_{i}G_{i})).
\label{LindbSum}
\end{eqnarray}

From this structure, we are now going to construct various well known positive maps in the following.

\subsection{Transposition map}

Note that for some specific values for the Lindblad coefficients, we get back the transposition map. By putting \(\gamma_{2}=\gamma_{4}=\gamma_{5}=\gamma_{7}=1/2 , \gamma_{3}=\gamma_{6}=\gamma_{8}=-1/2, \gamma_{9} = 1/6 \), we have the transposition map. Now if we generalize the map by considering \(\gamma_{2}=\gamma_{4}=\gamma_{5}=\gamma_{7}= \alpha, \gamma_{3}=\gamma_{6}=\gamma_{8}=-\alpha\), and \(\gamma_{9} = \alpha/3 \), we find an one parameter family of maps \(\Phi_{\alpha}\) which is positive for the range \(0 \leq \alpha \leq 1/2\).\\

First let us write the action of the map explicitly.  $\Phi_{\alpha} : \mathcal{M}_3 \rightarrow \mathcal{M}_3 $ is given by,\\
\begin{eqnarray}
&&\Phi_{\alpha}(X) = \nonumber\\ 
&& \begin{scriptsize} \begin{bmatrix}
x11& x_{12}-2 x_{12}\alpha+ 2x_{21}\alpha &~~ x_{13}-2x_{13}\alpha+2x_{31}\alpha\\
x_{21}+2x_{12}\alpha -2x_{21}\alpha & x_{22}& x_{23}-2x_{23}\alpha+2 x_{32}\alpha \\
x_{31}+2x_{13}\alpha -2x_{31}\alpha &~~ x_{32}+2x_{23}\alpha-2x_{32}\alpha & x_{33}
\end{bmatrix}
\end{scriptsize} \nonumber\\
\end{eqnarray}
for any \[X= \begin{bmatrix}
    x_{11}&x_{12}&x_{13}\\
    x_{21}&x_{22}&x_{23}\\
    x_{31}&x_{32}&x_{33}
\end{bmatrix} \in \mathcal{M}_3\]
 
To check the positivity of the map, it suffices to check positivity for pure states. Hence, for a general pure state \(\rho = \ket{\psi}\bra{\psi} \) where \(\ket{\psi} = [\psi_{1},\psi_{2},\psi_{3}]^{T}\), the matrix \(\Phi_{\alpha}[\rho]\) has to be a positive semi-definite. Thus, the determinants of all the principal minors of the matrix must be greater than or equal to zero \citep{pos1}. We have the principal minors as,
$2\alpha(2\alpha -1)(\psi_{2} \bar{\psi_{3}} - \psi_3\bar{\psi_2})^2$, 
$2\alpha(2\alpha -1)(\psi_{1} \bar{\psi_{3}} - \psi_3\bar{\psi_1})^2$,
$2\alpha(2\alpha -1)(\psi_{1} \bar{\psi_{2}} - \psi_2\bar{\psi_{1}})^2$.
For the map to be positive, we must have the abouve mentioned quantities to be greater than or equal to zero which is obtained for \(0 \leq \alpha \leq 1/2\). Therefore, the map is positive for this range of \(\alpha\).\\

Note that for the map to be able to detect entanglement we need the map to be positive but not completely positive. Hence we check at what range of $\alpha$, the map is not completely positive. We do this by checking if the corresponding Choi matrix is a positive semi definite matrix or not. The Choi matrix corresponding to the map is given by,
$\Psi_C= (\mathcal{I} \otimes \Phi_{\alpha})[\rho_{ent}]$ with, 
$\rho_{ent} = \ket{\psi_{ent}}\bra{\psi_{ent}}$ where, $\ket{\psi_{ent}} = \dfrac{1}{\sqrt{d}} \sum_{i=0}^{d-1}\ket{ii} = \dfrac{1}{\sqrt{3}}(\ket{00}+\ket{11}+\ket{22})$. The distinct eigenvalues for the Choi matrix for the map \(\Phi_{\alpha}\) are \{\(1 - \dfrac{4\alpha}{3},-\dfrac{2\alpha}{3},\dfrac{2\alpha}{3}\)\}. Hence, the map is not complete positive for \(\alpha \geq 0\) and finally positive but not completely positive for \(0 \leq \alpha \leq 1/2\).\\

Next We show that the above class of maps is decomposable and cannot detect any PPT entangled states. We note that the action of the map $\Phi_{\alpha}$ can be decomposed as the following:\\
\begin{eqnarray}
\Phi_{\alpha} (X)= (1-2 \alpha) \begin{bmatrix}
    x_{11}&x_{12}&x_{13}\\
    x_{21}&x_{22}&x_{23}\\
    x_{31}&x_{32}&x_{33}
\end{bmatrix} + 2 \alpha \begin{bmatrix}
    x_{11}&x_{21}&x_{31}\\
    x_{12}&x_{22}&x_{32}\\
    x_{13}&x_{23}&x_{33}
\end{bmatrix} \nonumber\\
\end{eqnarray}
for any $X \in \mathcal{M}_3$. Hence, the map can be written as,
\begin{eqnarray}
\Phi_{\alpha} = (1-2 \alpha) \mathcal{I}+ 2 \alpha \mathcal{T}\circ \mathcal{I}
\end{eqnarray}
where $\mathcal{I}$ and $\mathcal{T}$ are the identity map and transposition map respectively. Identity map being a completely positive, $\Phi_{\alpha}$ becomes decomposable map in the range \(0 \leq \alpha \leq 1/2\) as it can be written as sum of a completely positive map and a completely co-positive map. 
\\

\subsection{Decomposable Choi map}
Note that for \(\gamma_{2}=1/4 , \gamma_{3}=1/4, \gamma_{4}=1/4, \gamma_{5}=1/4, \gamma_{6}=1/4, \gamma_{7}=1/4, \gamma_{8}=1/4, \gamma_{9}=1/12 \), we have one of the Choi maps i.e.
\begin{eqnarray}
\Phi^{2}[\rho] = 
\begin{bmatrix}
    \dfrac{\rho_{22}+\rho_{33}}{2}       & -\dfrac{\rho_{12}}{2} & -\dfrac{\rho_{13}}{2} \\
    -\dfrac{\rho_{21}}{2}       & \dfrac{\rho_{11}+\rho_{33}}{2} & -\dfrac{\rho_{23}}{2} \\
    -\dfrac{\rho_{31}}{2}       & -\dfrac{\rho_{32}}{2} & \dfrac{\rho_{11}+\rho_{22}}{2}
\end{bmatrix}
\end{eqnarray}
Similarly, we parameterize it with \(\alpha\) i.e\ \(\gamma_{2}=\gamma_{3}=\gamma_{4}=\gamma_{5}=\gamma_{6}=\gamma_{7}=\gamma_{8}=\alpha\) and \(\gamma_{9}=\alpha/3\), where we find the map \(\Phi^{2}_{\alpha}\) is positive for the range \(0 \leq \alpha \leq 1/4\). On the other hand the map is not complete positive for \( \alpha > 3/16\). Hence, the map is Positive but not completely positive for \(3/16 < \alpha \leq 1/4\).
From Theorem 3.4 of the Generalized Choi maps in three-dimensional matrix algebra of \cite{CHO_33}, we get a necessary and sufficient condition for a map on qutrit system to be decomposable. And the Choi like map \(\Phi^{2}_{\alpha}\) we obtain is proved to be decomposable family of maps. We cannot obtain the original Choi map from this structure for any values of the \(\gamma_{i}'s\).
\\

\subsection{Indecomposable Choi map}
Though we cannot obtain the well known indecomposable Choi map from (\ref{LindbSum}), we can however reconstruct the original Choi map by taking a combination sum of Lindblad-type sums. Consider the map \(\Phi^{C}\) given below
\begin{eqnarray}
\Phi^{C}(\rho) = (\sum_{i=1}^{3} A_{i}\rho A_{i}^{\dagger} -\frac{1}{2}\{A_{i}A_{i}^{\dagger},\rho\}) \nonumber
\\- (\sum_{j=1}^{3} B_{j}\rho B_{j}^{\dagger} -\frac{1}{2}\{B_{j}B_{j}^{\dagger},\rho\}) \nonumber
\\+ (\sum_{k=1}^{3} C_{k}\rho C_{k}^{\dagger} -\frac{1}{2}\{C_{k}C_{k}^{\dagger},\rho\}) \nonumber
\end{eqnarray}
where \(A_{1} = \ket{1}\bra{2}, A_{2} = \ket{2}\bra{3}, A_{3} = \ket{3}\bra{1}\) and \(B_{1} = \ket{1}\bra{1}, B_{2} = \ket{2}\bra{2}, B_{3} = \ket{3}\bra{3}\) and \(C_{1} = \ket{1}\bra{1}-\ket{2}\bra{2}, C_{2} = \ket{2}\bra{2}-\ket{3}\bra{3}, C_{3} = \ket{3}\bra{3} - \ket{1}\bra{1}\) respectively. We can see that the following map
\[\Phi_F^C(\rho)=\left[\mathcal{I}+\frac{1}{2}\Phi^{C}\right](\rho),\]
is nothing but the well known Choi map,
\begin{eqnarray}
\Phi_F^{C}[\rho] = 
\begin{bmatrix}
    \rho_{11}+\rho_{22}       & -\rho_{12} & -\rho_{13} \\
    -\rho_{21}      & \rho_{22}+\rho_{33} & -\rho_{23} \\
    -\rho_{31}      & -\rho_{32} & \rho_{33}+\rho_{11}
\end{bmatrix}.
\end{eqnarray}
We would want to parameterize the above structure to find a range of Choi-like indecomposable maps. We can consider the following one parameter family of maps,
\begin{eqnarray}
\Phi^{C}_{\beta}[\rho] = S_1 - \beta(S_2 - S_3)
\label{LindbSums}
\end{eqnarray}
where \(\beta\) is the single parameter and at \(\beta = 1\), we get back the Choi map \(\Phi^{C}\). 
\begin{eqnarray}
S_1 = \sum_{i=1}^{3} A_{i}\rho A_{i}^{\dagger} -\frac{1}{2}\{A_{i}A_{i}^{\dagger},\rho\} \nonumber
\\S_2 = \sum_{j=1}^{3} B_{j}\rho B_{j}^{\dagger} -\frac{1}{2}\{B_{j}B_{j}^{\dagger},\rho\} \nonumber
\\S_3 = \sum_{k=1}^{3} C_{k}\rho C_{k}^{\dagger} -\frac{1}{2}\{C_{k}C_{k}^{\dagger},\rho\} \nonumber
\end{eqnarray}


Note that, the map \(\Phi_{F, \beta}^C\) (which is the $\beta$ parameterized linear map corresponding to the positive map $\Phi_F^{C}$) is a positive map for the range $0\leq \beta \leq 1$ and it is completely positive for $\beta \leq 3/4$.

\section{Conclusion}
\label{sec4}

In the literature of quantum information theory there exists an involved connection between entanglement detection and positive but not completely positive map. Also as entanglement is a basic and vastly used quantum resource in different information processing tasks, detecting it in an unknown system is one of the most crucial jobs in any experimental scenario. In this work, we give a generalized prescription to construct such positive maps from a given structure. Here we have investigated the structure of Lindblad superoperators for the case of qubits and qutrits and constructed various well known positive maps from them. We have found that in different parameter regions, we can reconstruct the transposition map and different variants of Choi map from the structure of the Lindbladians. We have further investigated a process of GME detection by exploiting positive maps, for the case of three qubits scenario. We have also constructed a  class of linear GME witness operators. Moreover, we have also been able to propose a negativity measure for GME, based on a modified map constructed out of transposition. This gives a non zero value only for GME states, in turn  distinguishing the bi-separable states from them. As an example we find the quantification of the measure for the noisy $W$ state. Our work can be directly generalized to higher dimensions and hence gives rise to a novel method to both detect and measure bi-partite and multi-partite entanglement.

\section{Acknowledgement}

BB acknowledges Ritabrata Sengupta for various discussions on the theory of positive maps.

%

\end{document}